\newcommand{\tinyspace}{\mspace{1mu}}
\newcommand{\abs}[1]{\left\lvert\tinyspace #1 \tinyspace\right\rvert}
\newcommand{\setft}[1]{\mathrm{#1}}
\newcommand{\density}[1]{\setft{D}\left(#1\right)}
\newcommand{\unitary}[1]{\setft{U}\left(#1\right)}
\def\I{\mathbb{1}}
\newenvironment{mylist}[1]{\begin{list}{}{
    \setlength{\leftmargin}{#1}
    \setlength{\rightmargin}{0mm}
    \setlength{\labelsep}{2mm}
    \setlength{\labelwidth}{8mm}
    \setlength{\itemsep}{0mm}}}
    {\end{list}}
\def\ot{\otimes}
\newcommand{\out}[2]{| #1\rangle\langle #2 |}
\newcommand{\defeq}{\stackrel{\smash{\textnormal{\tiny def}}}{=}}
\newcommand{\pa}[1]{(#1)}
\newcommand{\Pa}[1]{\left(#1\right)}
\newcommand{\set}[1]{\{#1\}}
\newcommand{\Set}[1]{\left\{#1\right\}}
\newcommand{\ket}[1]{|#1\rangle}
\DeclareMathOperator{\trace}{Tr}
\newcommand{\ptr}[2]{\trace_{#1}\pa{#2}}
\newcommand{\Ptr}[2]{\trace_{#1}\Pa{#2}}
\newcommand{\Tr}[1]{\Ptr{}{#1}}
\def\cE{\mathcal{E}}
\def\cH{\mathcal{H}}
\def\rS{\mathrm{S}}
\newtheorem{thrm}{Theorem}[section]
\newtheorem{prop}[thrm]{Proposition}
\theoremstyle{definition}
\numberwithin{equation}{section}
\newcounter{questionnumber}
\begin{document}

\title{\Large The saturation of several universal inequalities in information-processing}

\author{Lin Zhang$^{1,}$\footnote{E-mail: godyalin@163.com;
linyz@zju.edu.cn}\ , Junde Wu$^{2,}$\footnote{Corresponding author.
E-mail:
wjd@zju.edu.cn}\ , Shao-Ming Fei$^3$\\
  {\small $^1$\it Institute of Mathematics, Hangzhou Dianzi University, Hangzhou 310018, PR~China}\\
  {\small $^2$\it Department of Mathematics, Zhejiang University, Hangzhou 310027, PR~China}\\
   {\small $^3$\it School of Mathematics of Sciences, Capital Normal University, Beijing 100048, PR China}}

\date{}
\maketitle \mbox{}\hrule\mbox\\
\begin{abstract}

In this paper, we characterize the saturation of four universal
inequalities in quantum information theory, including a variant
version of strong subadditivity inequality for von Neumann entropy,
the coherent information inequality, the Holevo quantity and average
entropy inequalities. These results shed new light on quantum
information inequalities.\\~\\
\textbf{Keywords:} Strong subadditivity; Coherent information;
Holevo quantity; Quantum channel

\end{abstract}
\mbox{}\hrule\mbox\\

\section{Introduction and preliminaries}

Let $\cH$ be a finite dimensional complex Hilbert space. A
\emph{quantum state} $\rho$ on $\cH$ is a positive semi-definite
operator of trace one, in particular, for each unit vector
$\ket{\psi} \in \cH$, the operator $\rho = \out{\psi}{\psi}$ is said
to be a \emph{pure state}. The set of all quantum states on $\cH$ is
denoted by $\density{\cH}$. For each quantum state
$\rho\in\density{\cH}$, its von Neumann entropy is defined by
$\rS(\rho) = - \Tr{\rho\log_2\rho}$.

In quantum information theory, the strong subadditivity inequality
of von Neumann entropy, proved by Lie and Ruskai in \cite{Lieb}, stated
that
\begin{eqnarray*}\label{eq:SSA-1} \rS(\rho_{ABC}) +
\rS(\rho_B) \leqslant \rS(\rho_{AB}) + \rS(\rho_{BC}).
\end{eqnarray*}

The strong subadditivity inequality of von Neumann entropy is
ubiquitous, for example, as some direct consequences, the data
processing inequality, the well-known Holevo bound \cite{Roga}, in
particular, it connects with the monotonicity of relative entropy
under quantum channels \cite{Lindblad}.

If a reference system $D$ is introduced such that $\rho_{ABCD}$ is
the purification of $\rho_{ABC}$, that is, $\rho_{ABCD}$ is a pure
state with $\rho_{ABC} = \ptr{D}{\rho_{ABCD}}$, then an equivalent
version of strong subadditivity inequality of von Neumann entropy can be
described by \cite{Chuang}:
\begin{eqnarray}\label{eq:SSA-2}
\rS(\rho_D) +
\rS(\rho_B) \leqslant \rS(\rho_{AB}) + \rS(\rho_{AD}).
\end{eqnarray}

A \emph{quantum channel} $\Phi$ on $\cH$ is a trace-preserving
completely positive linear mapping defined on the set
$\density{\cH}$. It follows from (\cite[Prop.~5.2 and
Cor.~5.5]{Watrous}) that there exists linear operators
$\set{K_\mu}_\mu$ on $\cH$ such that $\sum_\mu K^\dagger_\mu K_\mu =
\I$ and for each quantum state $\rho$, we have the Kraus
representation
\begin{eqnarray*}
\Phi(\rho) = \sum_\mu K_\mu \rho K^\dagger_\mu.
\end{eqnarray*}

The \emph{complementary channel} $\widehat\Phi$ of $\Phi$ acts on
quantum state $\rho$ is defined by \cite{Roga}:
\begin{eqnarray*}
\widehat\Phi(\rho) = \sum_{\mu,\nu} \Tr{K_\mu \rho
K_\nu^\dagger}\out{\mu}{\nu},
\end{eqnarray*} the von Neumann entropy $\rS(\widehat\Phi(\rho))$
of $\widehat\Phi(\rho)$ is said to be the \emph{exchange entropy}.
For the topics related to complementary channel, the readers can
referred to \cite{Cubitt}.

The \emph{Coherent information $I_c(\rho,\Phi)$} can
be defined by the difference of von Neumann entropy
$\rS(\Phi(\rho))$ of output quantum state $\Phi(\rho)$ and the
exchange entropy \cite{Holevo}:
\begin{eqnarray*}
I_c(\rho,\Phi) = \rS(\Phi(\rho)) - \rS(\widehat\Phi(\rho)).
\end{eqnarray*}

It follows from the strong subadditivity
inequality of von Neumann entropy that the coherent information
$I_c(\rho,\Phi)$ can be bounded by the entropy $\rS(\rho)$ of the
initial state $\rS(\rho)$, that is
\begin{eqnarray}\label{eq:coh}
I_c(\rho,\Phi)\leqslant \rS(\rho).
\end{eqnarray}

Let $\cE = \set{(p_\mu,\rho_\mu)}$ be a quantum ensemble on $\cH$,
that is, each $\rho_\mu\in \density{\cH}$, $p_\mu > 0$, and
$\sum_{\mu} p_\mu=1$.  The \emph{Holevo quantity} of the quantum
ensemble $\Set{\Pa{p_\mu,\rho_\mu}}$ is defined by
\begin{eqnarray*}
\chi\Set{\Pa{p_\mu,\rho_\mu}} = \rS(\sum_\mu p_\mu \rho_\mu) -
\sum_\mu p_\mu \rS\Pa{\rho_\mu}.
\end{eqnarray*}

Let $\rho$ be a quantum state, and $\Phi (*)= \sum_\mu K_\mu(*)
K^\dagger_\mu$ be a quantum channel. If we denote
$$q_\mu = \Tr{K_\mu\rho K^\dagger_\mu}, \quad \quad \rho'_\mu =
q^{-1}_\mu K_\mu\rho K^\dagger_\mu
$$
and
$$
\rho' = \sum_\mu q_\mu \rho'_\mu,
$$ then $\Phi$ induced a quantum ensemble $\set{q_\mu,\rho'_\mu}$.

In \cite{Roga}, Roga proved that the Holevo quantity
$\chi(\set{q_\mu,\rho'_\mu})$ of quantum ensemble
$\set{q_\mu,\rho'_\mu}$ can be bounded by the exchange entropy
$\rS(\widehat\Phi(\rho))$, and the average entropy $\sum_\mu
q_\mu\rS(\rho'_\mu)$ of $\set{q_\mu,\rho'_\mu}$ can be bounded by
the entropy $\rS(\rho)$ of the initial state $\rho$, that is
\begin{eqnarray}\label{eq:exchangeentropy}
\chi(\set{q_\mu,\rho'_\mu}) \leqslant \rS\Pa{\widehat\Phi(\rho)},
\end{eqnarray}
and
\begin{eqnarray}\label{eq:initialentropy}
\sum_\mu q_\mu\rS(\rho'_\mu) \leqslant \rS(\rho).
\end{eqnarray}

In \cite{Hayden}, the saturation of strong subadditivity inequality $\rS(\rho_{ABC}) +
\rS(\rho_B) \leqslant \rS(\rho_{AB}) + \rS(\rho_{BC})$ of von Neumann entropy is presented:
\begin{prop}[\cite{Hayden}]\label{prop:SSAwithequality}
A state $\rho_{ABC}\in\density{\cH_A\ot\cH_B\ot\cH_C}$ saturates the strong subadditivity inequality of von Neumann entropy, that is, \begin{eqnarray*}
\rS(\rho_{AB}) + \rS(\rho_{BC}) = \rS(\rho_{ABC}) + \rS(\rho_B)
\end{eqnarray*}
if and only if there is a decomposition of system $B$ as
\begin{eqnarray*}
\cH_B = \bigoplus_j \cH_{b^L_j}\ot\cH_{b^R_j},
\end{eqnarray*} such that
\begin{eqnarray*}
\rho_{ABC} = \bigoplus_j \lambda_j\rho_{Ab^L_j}\ot \rho_{b^R_jC},
\end{eqnarray*}
where $\rho_{Ab^L_j}\in\density{\cH_A\ot\cH_{b^L_j}}$,
$\rho_{b^R_jC}\in\density{\cH_{b^R_j}\ot\cH_C}$,
$\set{\lambda_j}$ is a probability distribution.
\end{prop}

Let $\rho_{BC}\in\density{\cH_B\ot\cH_C}$,
$\rho_B =\ptr{C}{\rho_{BC}}, \rho_C =\ptr{B}{\rho_{BC}}$. The famous Araki-Lieb inequality showed that
\begin{eqnarray*}
\abs{\rS(\rho_B)-\rS(\rho_C)}\leqslant \rS(\rho_{BC}).
\end{eqnarray*}

In \cite{Zhang}, the saturation of Araki-Lieb inequality is presented:

\begin{prop}[\cite{Zhang}]\label{prop:Araki-Lieb}
$\rS(\rho_{BC}) = \rS(\rho_B) - \rS(\rho_C)$ if
and only if
\begin{enumerate}[(i)]
\item\label{1} $\cH_B$ can be decomposed into $\cH_B = \cH_L \ot \cH_R$,
\item\label{2} $\rho_{BC} = \rho_L \ot \out{\psi}{\psi}_{RC}$ for $\ket{\psi}_{RC} \in \cH_R \ot \cH_C$.
\end{enumerate}
\end{prop}

In \cite{Xi}, the authors applied Proposition 1.2 to study the
saturation of the upper bound of quantum discord. In \cite{Carlen},
the authors gave  Proposition 1.2 an elementary proof.

In this paper, we study the saturation of the four universal
inequalities \eqref{eq:SSA-2} through \eqref{eq:initialentropy}.

\section{The saturation of strong subadditivity inequality}

In this section, we give a characterization to the structure of
states which saturate the strong subadditivity inequality \eqref{eq:SSA-2} of von Neumann entropy. That is

\begin{thrm}\label{th:mainresult}
Let $\sigma_{ABC}\in\density{\cH_A\ot\cH_B\ot\cH_C}$. Then
\begin{eqnarray}\label{eq:anotherversion-SSA-1}
\rS(\sigma_A) + \rS(\sigma_C) = \rS(\sigma_{AB}) + \rS(\sigma_{CB})
\end{eqnarray}
if and only if there are two decompositions of system $A$ and $C$, respectively, as
\begin{eqnarray}
\cH_A =
\bigoplus_{i=1}^{K_A}\cH_{a^L_i}\ot\cH_{a^R_i}\quad\text{and}\quad
\cH_C = \bigoplus_{j=1}^{K_C}\cH_{c^L_j}\ot\cH_{c^R_j}
\end{eqnarray}
such that
\begin{eqnarray}\label{eq:abc}
\sigma_{ABC} = \bigoplus_{i,j}\mu_{ij}\sigma_{a^L_iBc^L_j} \ot
\sigma_{a^R_ic^R_j},
\end{eqnarray}
where
$\sigma_{a^L_iBc^L_j}\equiv\out{\psi}{\psi}_{a^L_iBc^L_j}\in\density{\cH_{a^L_i}\ot\cH_B\ot\cH_{c^L_j}}$
,$\sigma_{a^R_ic^R_j}\in\density{\cH_{a^R_i}\ot\cH_{c^R_j}}$ and
$\set{\mu_{ij}}$ is a joint probability distribution.
\end{thrm}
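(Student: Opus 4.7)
The plan is to reduce~\eqref{eq:anotherversion-SSA-1} to the saturation of the original SSA~\eqref{eq:SSA-1} via purification and then invoke the Markov-state structure theorem of~\cite{Hayden} twice. Let $\ket{\psi}_{ABCR}$ be a purification of $\sigma_{ABC}$ and set $\tau=\out{\psi}{\psi}$. Because $\tau$ is pure, $\rS(\tau_C)=\rS(\tau_{ABR})$ and $\rS(\tau_{CB})=\rS(\tau_{AR})$, so~\eqref{eq:anotherversion-SSA-1} rewrites as
\[
\rS(\tau_A)+\rS(\tau_{ABR})=\rS(\tau_{AB})+\rS(\tau_{AR}),
\]
which is exactly $I(B\!:\!R\mid A)_\tau=0$, i.e.\ the saturation of~\eqref{eq:SSA-1} for $\tau_{RAB}$ with middle system $A$. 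A parallel manipulation that uses $\rS(\tau_A)=\rS(\tau_{BCR})$ and $\rS(\tau_{AB})=\rS(\tau_{CR})$ shows that~\eqref{eq:anotherversion-SSA-1} is likewise equivalent to $I(B\!:\!R\mid C)_\tau=0$.

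\textbf{Two Hayden decompositions.} Applying~\cite{Hayden} to $\tau_{RAB}$ with $A$ as the middle system, the condition $I(B\!:\!R\mid A)_\tau=0$ produces a direct-sum factorisation $\cH_A=\bigoplus_i\cH_{a^L_i}\otimes\cH_{a^R_i}$ together with
\[
\tau_{RAB}=\bigoplus_i q_i\,\tau^i_{R a^R_i}\otimes\tau^i_{a^L_i B},
\]
where (to match the labelling chosen in the theorem) $B$ is paired with the $L$-component of each block. Applying the same theorem to $\tau_{RCB}$ with $C$ in the middle produces a second decomposition $\cH_C=\bigoplus_j\cH_{c^L_j}\otimes\cH_{c^R_j}$ with the analogous block form for $\tau_{RCB}$.

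\textbf{Reconstructing $\sigma_{ABC}$.} Purity of $\tau$ allows each blockwise product in $\tau_{RAB}$ to be realised as the partial trace of two bipartite pure states, using portions of $\cH_C$ as purifying ancillas: for each $i$ let $\ket{\alpha^i}_{a^R_i R c^R_i}$ purify $\tau^i_{R a^R_i}$ and $\ket{\beta^i}_{a^L_i B c^L_i}$ purify $\tau^i_{a^L_i B}$, so that
\[
\ket{\psi}_{ABCR}=\sum_i\sqrt{q_i}\,\ket{\alpha^i}\otimes\ket{\beta^i}.
\]
Tracing out $R$ and using the mutual orthogonality of distinct $A$-blocks (and, via the second Hayden application, of distinct $C$-blocks) to annihilate every cross term yields
\[
\sigma_{ABC}=\bigoplus_{i,j}\mu_{ij}\,\sigma_{a^L_i B c^L_j}\otimes\sigma_{a^R_i c^R_j},
\]
with $\sigma_{a^L_i B c^L_j}=\out{\beta^{ij}}{\beta^{ij}}$ pure, $\sigma_{a^R_i c^R_j}=\Ptr{R}{\out{\alpha^{ij}}{\alpha^{ij}}}$, and $\mu_{ij}$ obtained by reading off the joint probabilities on the common refinement of the two Hayden decompositions.

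\textbf{Converse and main obstacle.} For the converse I would assume the displayed form and verify~\eqref{eq:anotherversion-SSA-1} by direct computation: additivity of $\rS$ under $\bigoplus$ and $\otimes$, the vanishing $\rS(\sigma_{a^L_i B c^L_j})=0$, and the pure-state identity $\rS(\Ptr{c^L_j}{\sigma_{a^L_i B c^L_j}})=\rS(\Ptr{a^L_i B}{\sigma_{a^L_i B c^L_j}})$ combine to make the four entropies in~\eqref{eq:anotherversion-SSA-1} cancel termwise. The hardest part is the assembly step: one has to check that the two Hayden decompositions of $\cH_A$ and $\cH_C$ are mutually compatible with the purity of $\ket{\psi}$, so that the purifying subspaces of $\cH_C$ used for the $A$-blocks of $\tau_{RAB}$ align with the direct-sum decomposition of $\cH_C$ furnished by the second Hayden application. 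This compatibility is precisely what forces $B$ to be absorbed entirely into the pure factor $\sigma_{a^L_i B c^L_j}$ of each joint block, rather than being shared between the pure and mixed factors.
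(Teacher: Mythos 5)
Your overall strategy coincides with the paper's: purify $\sigma_{ABC}$ (your $R$ is the paper's $D$), observe that the equality \eqref{eq:anotherversion-SSA-1} is equivalent to the vanishing of both conditional mutual informations $I(B{:}R|A)$ and $I(B{:}R|C)$ of the purified state, and apply Proposition~\ref{lem:SSAwithequality} twice to obtain the two direct-sum factorizations, one of $\cH_A$ and one of $\cH_C$. Up to that point your argument is correct and is exactly the paper's.

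The gap is the one you name yourself, and it is genuine: the assembly step. Your ansatz $\ket{\psi}_{ABCR}=\sum_i\sqrt{q_i}\,\ket{\alpha^i}\ot\ket{\beta^i}$ carries a single block index $i$, so tracing out $R$ cannot by itself produce the doubly indexed direct sum $\bigoplus_{i,j}$ of \eqref{eq:abc}; and the claim that the portions of $\cH_C$ used to purify the $A$-blocks align with the direct-sum decomposition of $\cH_C$ delivered by the second application of Hayden's theorem is essentially the content of the theorem, not something that can be deferred to a final compatibility check. The paper closes this step differently (and also rather tersely): it first combines the two block structures of $\sigma_{ABD}$ and $\sigma_{BCD}$ into the doubly indexed form $\bigoplus_{i,j}\mu_{ij}\,\sigma^{(ij)}_{a^L_iBc^L_j}\ot\sigma^{(ij)}_{a^R_ic^R_j}$ (in the spirit of Proposition~\ref{bi-SSA}), notes that each block again satisfies the same entropy equality $\rS\Pa{\sigma^{(ij)}_{a^L_iB}}+\rS\Pa{\sigma^{(ij)}_{Bc^L_j}}=\rS\Pa{\sigma^{(ij)}_{a^L_i}}+\rS\Pa{\sigma^{(ij)}_{c^L_j}}$, and then argues that if $a^L_i$ and $c^L_j$ are chosen minimal (admitting no further decomposition of the same kind) each $\sigma^{(ij)}_{a^L_iBc^L_j}$ must be pure --- purity of the tripartite block, not the global purification, being the base case of the equality. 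If you want to complete your version, this blockwise-recursion argument is the missing ingredient; the purification bookkeeping alone will not force $B$ to sit entirely inside the pure factor. Your verification of the converse direction is fine and matches the paper's one-line check.
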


\begin{proof}
We introduce a reference system $D$ such that $\sigma_{ABCD}$ is a
purification of $\sigma_{ABC}$. Thus
Equation ~\eqref{eq:anotherversion-SSA-1} can be rewritten into
\begin{eqnarray}\label{eq:anotherversion-SSA-2}
\rS(\sigma_A) + \rS(\sigma_C) = \rS(\sigma_{CD}) + \rS(\sigma_{AD}).
\end{eqnarray}
It can be seen that, when the systems $A$ and $C$ are fixed, the
systems $B$ and $D$ play a symmetric role in
Equation ~\eqref{eq:anotherversion-SSA-1} and
Equation ~\eqref{eq:anotherversion-SSA-2}. Analogously, we have
\begin{eqnarray}
\rS(\sigma_A) + \rS(\sigma_{ABD}) &=& \rS(\sigma_{AB}) + \rS(\sigma_{AD}),\label{eq:anotherversion-SSA-3}\\
\rS(\sigma_{CBD}) + \rS(\sigma_C) &=& \rS(\sigma_{CD}) +
\rS(\sigma_{CB}).\label{eq:anotherversion-SSA-4}
\end{eqnarray}
Again, when the systems $B$ and $D$ are fixed, the systems $A$ and
$C$ play a symmetric role in Equation ~\eqref{eq:anotherversion-SSA-3} and
Equation ~\eqref{eq:anotherversion-SSA-4}.

Now it follows from Proposition~\ref{prop:SSAwithequality}~that there are
two decompositions of $A$ and $C$, respectively,
\begin{eqnarray}
\cH_A = \bigoplus_{i=1}^{K_A}
\cH_{a^L_i}\ot\cH_{a^R_i}\quad\text{and}\quad \cH_C =
\bigoplus_{j=1}^{K_C} \cH_{c^L_j}\ot\cH_{c^R_j}
\end{eqnarray}
such that
\begin{eqnarray}
\sigma_{ABD} = \bigoplus_i p_i\sigma_{a^L_iB}\ot
\sigma_{a^R_iD}\quad\text{and}\quad \sigma_{BCD} = \bigoplus_j
q_j\sigma_{Bc^L_j}\ot \sigma_{c^R_jD}.
\end{eqnarray}
Thus $\sigma_{ABC}$ must be of the form:
$$
\sigma_{ABC} = \bigoplus_{i,j} \mu_{ij}\sigma_{a^L_iBc^L_j}^{(ij)}
\ot \sigma_{a^R_ic^R_j}^{(ij)},
$$
where
$$
\rS\Pa{\sigma_{a^L_iB}^{(ij)}} + \rS\Pa{\sigma_{Bc^L_j}^{(ij)}} =
\rS\Pa{\sigma_{a^L_i}^{(ij)}} +
\rS\Pa{\sigma_{c^L_j}^{(ij)}}\quad(\forall i,j).
$$
Without loss of generality, we assume that the system $a^L_i$ and
$c^L_j$ can not decomposed like the $\cH_A$ and $\cH_C$,
respectively. Therefore $\sigma_{a^L_iBc^L_j}$ must be a pure state,
which implies that
$$
\sigma_{a^L_iBc^L_j} \equiv \out{\psi}{\psi}_{a^L_iBc^L_j}.
$$
Conversely, if the state $\sigma_{ABC}$ has the form of
Equation ~\eqref{eq:abc}, then it is easy to check that
Equation ~\eqref{eq:anotherversion-SSA-1} holds.
\end{proof}

We would like to point out that if $\sigma_{ABC}$ is a pure state,
by Proposition~\ref{prop:SSAwithequality}, then there must exist a
decomposition of $\sigma_{ABC}$ such that its substates are locally
pure states.

\section{The saturation of coherent information inequality}

In this section, we make an attempt towards the saturation of
coherent information inequality \eqref{eq:coh}.

\begin{thrm}\label{prop:coh-max}
Let $\rho\in\density{\cH}$ and $\Phi$ be a quantum channel defined
over $\cH$. Then $I_c(\rho,\Phi) = \rS(\rho)$ if and only if the
following statements hold:
\begin{enumerate}[(i)]
\item The Hilbert space $\cH$ can be decomposed into $\cH = \cH_L\ot\cH_R$;
\item The output state $\Phi(\rho)$ of the quantum channel $\Phi$ is a product state: $\Phi(\rho) = \rho_L\ot\rho_R$, where
$\rho_L\in\density{\cH_L},\rho_R\in\density{\cH_R}$.
\end{enumerate}
\end{thrm}

\begin{proof}
Note that
\begin{eqnarray}
\rS(\widehat\Phi(\rho)) = \rS((\I_A\ot\Phi)(\out{\mathbf{u}_\rho}{\mathbf{u}_\rho})),
\end{eqnarray}
where $\ket{\mathbf{u}_\rho}$ is a purification of $\rho$ in a larger Hilbert space $\cH_A\ot\cH_B$, where $\cH_B\equiv\cH$. It was shown that there exists a quantum channel $\Psi$ (see
\cite{Hayden}) such that
\begin{eqnarray}
I_c(\rho,\Phi) = \rS(\rho) \Longleftrightarrow
(\I_A\ot\Psi\circ\Phi)(\out{\mathbf{u}_\rho}{\mathbf{u}_\rho}) =
\out{\mathbf{u}_\rho}{\mathbf{u}_\rho}.
\end{eqnarray}
By the \emph{Stinespring dilation theorem} (see \cite{Watrous}), we may assume that
$$
\Phi(\rho) =
\Ptr{C}{U(\rho\ot\out{0}{0})U^\dagger},\quad
U\in\unitary{\cH_B\ot\cH_C}, \ket{0}\in\cH_C,
$$
which indicates that
\begin{eqnarray}
\I_A\ot\Phi(\out{\mathbf{u}_\rho}{\mathbf{u}_\rho}) &=&
\ptr{C}{(\I_A\ot
U)(\out{\mathbf{u}_\rho}{\mathbf{u}_\rho}\ot\out{0}{0})(\I_A\ot
U)^\dagger} \nonumber\\
&= &\Ptr{C}{\out{\Omega}{\Omega}},
\end{eqnarray}
where
$\ket{\Omega} \defeq (\I_A\ot U)(\ket{\mathbf{u}_\rho} \ot
\ket{0})$. Now
$$
\out{\Omega}{\Omega} =(\I_A\ot
U)(\out{\mathbf{u}_\rho}{\mathbf{u}_\rho} \ot
\out{0}{0})(\I_A\ot U)^\dagger
$$
is a tripartite state
on $\cH_A \ot \cH_B \ot \cH_C$, it follows that
\begin{eqnarray*}
\ptr{C}{\out{\Omega}{\Omega}} & = & \I_A\ot\Phi(\out{\mathbf{u}_\rho}{\mathbf{u}_\rho}) \equiv \Omega_{AB},\\
\ptr{A}{\out{\Omega}{\Omega}} & = & U(\rho\ot\out{0}{0})U^\dagger \equiv \Omega_{BC},\\
\ptr{AC}{\out{\Omega}{\Omega}} & = & \Phi(\rho) \equiv \Omega_B,
\end{eqnarray*}
where $\Omega_{ABC} \equiv \out{\Omega}{\Omega}$. From the above
expressions, it is obtained that
\begin{eqnarray*}
\rS(\Omega_{ABC}) & = & 0, \\
\rS(\Omega_B) & = & \rS(\Phi(\rho))\\
\rS(\Omega_{BC}) & = & \rS(\rho),\\
\rS(\Omega_{AB}) & = &
\rS((\I_A\ot\Phi)(\out{\mathbf{u}_\rho}{\mathbf{u}_\rho}))
\end{eqnarray*}
Apparently, $I_c(\rho,\Phi) = \rS(\rho) \Longleftrightarrow
\rS(\Phi(\rho)) =
\rS((\I_A\ot\Phi)(\out{\mathbf{u}_\rho}{\mathbf{u}_\rho})) +
\rS(\rho)$, that is,
\begin{eqnarray*}
I_c(\rho,\Phi) = \rS(\rho) &\Longleftrightarrow&
\rS(\Omega_B) = \rS(\Omega_{AB}) + \rS(\Omega_{BC})\\
&\Longleftrightarrow& \rS(\Omega_B) - \rS(\Omega_C)
=\rS(\Omega_{BC}).
\end{eqnarray*}
It follows from Proposition~\ref{prop:Araki-Lieb}~that this equality holds
if and only if
\begin{enumerate}[(i)]
\item $\cH_B$ can be factorized into the form $\cH_B = \cH_L \ot \cH_R$,
\item $\Omega_{BC} = \rho_L \ot \out{\psi}{\psi}_{RC}$ for $\ket{\psi}_{RC} \in \cH_R \ot \cH_C$.
\end{enumerate}
That is,
$$
\Phi(\rho) = \Ptr{C}{\rho_L \ot \out{\psi}{\psi}_{RC}}= \rho_L\ot\rho_R.
$$
This indicates that if the coherent information arrives at
$\rS(\rho)$, then the output state $\Phi(\rho)$ of the quantum
channel $\Phi$ is a product state.
\end{proof}

\section{The saturation of Holevo quantity and average entropy inequalities}

In this section, we study the saturation of Holevo quantity
inequality \eqref{eq:exchangeentropy} and average entropy inequality
\eqref{eq:initialentropy} which are induced by a quantum channel.

\begin{thrm}
With the above notation, we have the following result:
\begin{eqnarray}\label{eq:exchange-entropy}
\chi(\set{q_\mu,\rho'_\mu}) = \rS\Pa{\widehat\Phi(\rho)}~~\text{if
and only if}~~\Phi(\rho) = \bigoplus_{i,j} p_{ij}
\omega_{a^L_i}^{(ij)} \ot \omega_{a^R_i}^{(ij)}.
\end{eqnarray}
\end{thrm}

\begin{proof}
In order to prove the conclusion, we need to go back to the original
proof of \cite{Roga}. Where the authors introduced a tripartite state
\begin{eqnarray}
\omega_{ABC}
\defeq \sum_{\mu,\nu} \Pa{K_\mu\rho K^\dagger_\nu}_A \ot \out{\mu}{\nu}_B\ot
\out{\mu}{\nu}_C.
\end{eqnarray}
From the above expression, we see that $\omega_{ABC}$ is a symmetric
state for $BC$ relative to $A$ and $\cH_B=\cH_C$. Denote $q_\mu =
\Tr{K_\mu \rho K^\dagger_\mu}$ and $\rho'_\mu = q_\mu^{-1}K_\mu \rho
K^\dagger_\mu$. Since
$$
\rS(\omega_{BC}) =
\rS(\widehat\Phi(\rho)), \quad\rS(\omega_A) = \rS(\sum_\mu q_\mu
\rho'_\mu), \quad\sum_\mu q_\mu\rS(\rho'_\mu) = \rS(\omega_{AC}) -
\rS(\omega_B).
$$
Therefore $\chi(\set{q_\mu,\rho'_\mu}) = \rS\Pa{\widehat\Phi(\rho)}$
if and only if $\rS(\omega_A) + \rS(\omega_C) = \rS(\omega_{AB}) +
\rS(\omega_{BC})$. This amounts to say, by
Theorem~\ref{th:mainresult}, that
\begin{eqnarray}
\omega_{ABC} = \sum_{\mu,\nu} \Pa{K_\mu\rho K^\dagger_\nu}_A \ot \out{\mu}{\nu}_B\ot \out{\mu}{\nu}_C
= \bigoplus_{i,j} p_{ij} \omega_{a^L_iBc^L_j}^{(ij)}
\ot \omega_{a^R_ic^R_j}^{(ij)},
\end{eqnarray}
where each $\omega_{a^L_iBc^L_j}^{(ij)}$ is a pure state. Since both
$B$ and $C$ are identical, it follows that
$$
\sum_{\mu} K_\mu\rho K^\dagger_\mu\ot \out{\mu}{\mu} =
\bigoplus_{i,j} p_{ij} \omega_{a^L_ic^L_j}^{(ij)} \ot
\omega_{a^R_ic^R_j}^{(ij)},
$$
which implies that
$$
\Phi(\rho) = \omega_A = \bigoplus_{i,j} p_{ij} \omega_{a^L_i}^{(ij)}
\ot \omega_{a^R_i}^{(ij)}.
$$
From this expression, it is seen that the output state of a quantum
channel $\Phi$ with input state $\rho$ is a weighted state of block
diagonal form in some basis.
\end{proof}

\begin{thrm}
The average entropy attains the entropy of the initial state $\rho$,
that is, $\sum_\mu q_\mu\rS(\rho'_\mu) = \rS(\rho)$ if and only if
$\Phi(*)=U(*)U^\dagger$ for some unitary operator $U$.
\end{thrm}
\begin{proof}
Note that $\sum_\mu
q_\mu\rS(\rho'_\mu) = \rS(\rho)$ if and only if $\rS(\omega_{AC}) -
\rS(\omega_B) = \rS(\omega_{ABC}) = \rS(\omega_{AB}) -
\rS(\omega_C)$. This amounts to say, by
Proposition~\ref{prop:Araki-Lieb}, that
$$
\omega_{ABC} = \omega_L\ot\out{\psi}{\psi}_{RC},\quad \omega_{ACB} =
\omega_{\hat L}\ot\out{\psi}{\psi}_{\hat RB}.
$$
Since $\omega_{ABC}=\omega_{ACB}$, in fact,  the composite system of
$B$ and $C$ stays in a symmetric state when we ignore system $A$,
that is, swapping the role of $B$ and $C$ leaves $\omega_{ABC}$
invariant, it follows that
$$
\omega_A = \omega_L = \omega_{\hat L},\quad \omega_R = \omega_B,\quad \omega_C = \omega_{\hat R}.
$$
This fact indicates that system $A$ has no correlation with the
composite system of $B$ and $C$, which is in a pure symmetric state
$\out{\psi}{\psi}_{BC}$ on $\cH_B\ot\cH_C$ with $\cH_B=\cH_C$. Thus,
we have $\omega_{ABC} = \omega_A\ot\omega_{BC}$ with
$\omega_{BC}=\out{\psi}{\psi}_{BC}$, that is,
$$
\out{\psi}{\psi}_{BC} = \sum_{\mu,\nu} \Tr{K_\mu\rho K^\dagger_\nu} \out{\mu}{\nu}\ot\out{\mu}{\nu}.
$$
Now $W\ket{\mu}=\ket{\mu\mu}$ for all $\mu$ defines an isometry $W$,
and
$$
\out{\psi}{\psi}_{BC} = W\Pa{\sum_{\mu,\nu} \Tr{K_\mu\rho
K^\dagger_\nu} \out{\mu}{\nu}}W^\dagger.
$$
Therefore, the von Neumann entropy of $\sum_{\mu,\nu} \Tr{K_\mu\rho
K^\dagger_\nu} \out{\mu}{\nu}$ is vanished since isometric
transformation leaves the von Neumann entropy of state invariant.
This indicates that $$\sum_{\mu,\nu} \Tr{K_\mu\rho K^\dagger_\nu}
\out{\mu}{\nu} \equiv \widehat\Phi(\rho)$$ is still a pure state.
Define an isometry $V$ as follows:
$$
V\ket{\phi} = K_\mu\ket{\phi}\ot\ket{\mu}, ~~\forall \mu.
$$
It follows that
$$
V\rho V^\dagger = \sum_{\mu,\nu} K_\mu\rho K^\dagger_\nu\ot
\out{\mu}{\nu},
$$
implying that $\Phi(\rho) = \ptr{2}{V\rho V^\dagger}$ and
$\widehat\Phi(\rho)=\ptr{1}{V\rho V^\dagger}$. Note that
$\widehat\Phi(\rho)$ is a pure state, the bipartite state
$\sum_{\mu,\nu} K_\mu\rho K^\dagger_\nu\ot \out{\mu}{\nu}$ is a
product state: $V\rho V^\dagger = \ptr{2}{V\rho
V^\dagger}\ot\ptr{1}{V\rho V^\dagger}$. Therefore
\begin{eqnarray}\label{eq:pure-Envir}
\sum_{\mu,\nu} K_\mu\rho K^\dagger_\nu\ot \out{\mu}{\nu} =
\Phi(\rho)\ot\widehat\Phi(\rho).
\end{eqnarray}
Again since $\widehat\Phi(\rho)$ is a pure state, there must exist
complex numbers $\lambda_\mu$ such that $\Tr{K_\mu\rho
K^\dagger_\nu}=\lambda_\mu\bar\lambda_\nu$ for complex numbers
$\lambda_\mu$. Clearly $\sum_\mu\abs{\lambda_\mu}^2=1$. Now we can
infer from Equation ~\eqref{eq:pure-Envir} that
$$
\Phi(\rho) = \Pa{\lambda^{-1}_\mu K_\mu}\rho\Pa{\lambda^{-1}_\nu K_\nu}^\dagger= \Pa{\lambda^{-1}_\nu K_\nu}\rho\Pa{\lambda^{-1}_\mu K_\mu}^\dagger,\quad\forall \mu,\nu
$$
or
$$
K_\mu\rho K^\dagger_\nu = \lambda_\mu\bar\lambda_\nu\Phi(\rho),
$$
which implies that
\begin{eqnarray}
\rho &=& (\sum_\mu K_\mu^\dagger K_\mu)\rho(\sum_\nu K_\nu^\dagger K_\nu) = \sum_{\mu,\nu} K_\mu^\dagger (K_\mu\rho K_\nu^\dagger) K_\nu\\
&=& (\sum_\mu \lambda_\mu K^\dagger_\mu)\Phi(\rho) (\sum_\nu \lambda_\nu K^\dagger_\nu)^\dagger \equiv M\Phi(\rho)M^\dagger,
\end{eqnarray}
where $M\defeq \sum_\mu \lambda_\mu K^\dagger_\mu$. From Equation ~\eqref{eq:pure-Envir}, we can see that
$$
\rS(\rho) = \rS(V\rho V^\dagger)= \rS(\Phi(\rho)) +
\rS(\widehat\Phi(\rho)) = \rS(\Phi(\rho))
$$
since $\widehat\Phi(\rho)$ is a pure state. Moreover, we can have $I_c(\rho,\Phi) = \rS(\rho)$. This showed that
$$
\Phi(\rho) = \rho_L\ot\rho_R.
$$

In the above process, the output state of the complementary channel is
pure state. Without loss of generality, we assume that the
environment starts in a pure state, this implies that the
complementary channel is an unitary channel. From the basic
properties of quantum channel, we obtain that $\sum_\mu q_\mu\rS(\rho'_\mu) = \rS(\rho)$ if and only if the
quantum channel $\Phi$ is the unitary channel.
\end{proof}

\section{Concluding remarks}

In this paper we give characterizations of several famous quantum
information inequalities when becoming equalities. Specifically, we
characterize the saturation of four universal inequalities in
quantum information theory, including a variant version of strong
subadditivity inequality for von Neumann entropy, the coherent
information inequality, the Holevo quantity and average entropy
inequalities.  The proofs are based on the works of Hayden etc.
\cite{Hayden}, and that of Zhang and Wu \cite{Zhang}. These results
shed new light on quantum information inequalities.

In the future research, we will investigate the approximate version
of the above-discussed information inequalities since the
approximate information inequalities are more or less related to
entanglement theory.

\noindent\textbf{Acknowledgement.} We want to express our heartfelt
thanks to Wojciech Roga for useful comments. This work is supported
by National Natural Science Foundation of China (11301124, 11171301)
and the Doctoral Programs Foundation of Ministry of Education of
China (J20130061).



\end{document}